\newtheorem{lemma}{Lemma}
\newtheorem{theorem}{Theorem}
\theoremstyle{definition}
\newtheorem{definition}{Definition}
\newtheorem{example}{Example}
\theoremstyle{remark}
\newtheorem*{remark}{Remark}
\DeclareMathOperator{\tr}{tr}
\begin{document}

\begin{CJK*}{UTF8}{}

\title{Eigenstate entanglement in the Sachdev-Ye-Kitaev model}

\CJKfamily{gbsn}

\author{Yichen Huang (黄溢辰)}
\email{yichen.huang@microsoft.com}
\affiliation{Institute for Quantum Information and Matter, California Institute of Technology, Pasadena, California 91125, USA}

\author{Yingfei Gu (顾颖飞)}
\email{yingfei\_gu@g.harvard.edu}
\affiliation{Department of Physics, Harvard University, Cambridge, Massachusetts 02138, USA}

\date{\today}

\begin{abstract}

We study the entanglement entropy of eigenstates (including the ground state) of the Sachdev-Ye-Kitaev model. We argue for a volume law, whose coefficient can be calculated analytically from the density of states. The coefficient depends on not only the energy density of the eigenstate but also the subsystem size. Very recent numerical results of Liu, Chen, and Balents confirm our analytical results.

\end{abstract}

\maketitle

\end{CJK*}

\section{Introduction}

Entanglement, a concept of quantum information theory, has been widely used in condensed matter and high energy physics \cite{BKLS86, Sre93, RT06a, RT06b} to provide insights beyond those obtained via ``conventional'' quantities. For ground states of (geometrically) local Hamiltonians, it characterizes quantum criticality \cite{HLW94, VLRK03, LRV04, CC04, CC09} and topological order \cite{KP06, LW06, LH08}. The scaling of entanglement \cite{ECP10} is quantitatively related to the classical simulability of quantum many-body systems \cite{VC06, SWVC08, Osb12, GHLS15, Hua15}.

Besides ground states, it is also important to understand the entanglement of excited eigenstates. Significant progress has been made in this regard \cite{Deu10, SPR12, DLS13, HNO+13, BN13, HM14, KLW15, YCHM15, DKPR16, VHBR17, VR17, DLL18, NWFS+18, GG15, Hua19, LG17}. For chaotic local Hamiltonians, one expects a volume law: The entanglement entropy of an eigenstate between a subsystem (smaller than half the system size) and its complement scales as the subsystem size with a coefficient depending on the energy density of the eigenstate. Indeed, analytical arguments \cite{Deu10} and numerical simulations \cite{SPR12, DLS13, GG15} strongly suggest that it is to leading order equal to the thermodynamic entropy of the subsystem at the same energy density.

Let us put this result in context. The eigenstate thermalization hypothesis (ETH) states that for expectation values of local observables, a single eigenstate resembles a thermal state with the same energy density \cite{Deu91, Sre94, RDO08}. The equivalence of entanglement and thermodynamic entropies is a variant of ETH for the von Neumann entropy.

Over the past few years, the Sachdev-Ye-Kitaev (SYK) model \cite{SY93, Kit15, MS16} has become a very active research topic in condensed matter and high energy physics. As a quantum mechanical model of $N\gg1$ Majorana fermions with random all-to-all interactions, it is exactly solvable in the large-$N$ limit, and has an extensive zero-temperature entropy. Entanglement and ETH in the SYK model have recently been studied \cite{FS16, gu2017spread, kourkoulou2017pure, SV17, HLZ18}. In particular, the numerical results of Fu and Sachdev \cite{FS16} suggest the breakdown of the equivalence of entanglement and thermodynamic entropies in the SYK model: The ground-state entanglement entropy appears to be slightly less than the maximum ($\frac{1}{2}\ln2$ per Majorana fermion) and is significantly greater than the zero-temperature entropy.

We argue that due to all-to-all interactions, the equivalence of entanglement and thermodynamic entropies should be modified as follows: The entanglement entropy of an eigenstate (including the ground state) for a subsystem smaller than half the system size is to leading order equal to the thermodynamic entropy of the subsystem at a different energy density, which depends on not only the energy density of the eigenstate but also the subsystem size. This allows us to derive an analytical expression for the scaling of the eigenstate entanglement entropy in the SYK model from the density of states.

The SYK model is maximally chaotic \cite{Kit15, MS16} in the sense of the maximum ``Lyapunov exponent'' \cite{MSS16} for the exponential growth of out-of-time-ordered correlators \cite{LO69, SS14b, RSS15, SS15, Kit14}. Our argument is not related to this dynamical behavior, and should apply to a broad class of chaotic quantum many-body systems.

\section{Preliminaries}

\begin{definition} [entanglement entropy]
The entanglement entropy of a bipartite pure state $\rho_{AB}$ is defined as the von Neumann entropy
\begin{equation}
S(\rho_A)=-\tr(\rho_A\ln\rho_A)
\end{equation}
of the reduced density matrix $\rho_A=\tr_B\rho_{AB}$.
\end{definition}

Consider a quantum mechanical system of $N$ Majorana fermions $\chi_1,\chi_2,\ldots,\chi_N$, where $N\gg1$ is an even number. The Hamiltonian of the SYK model is
\begin{equation} \label{syk}
H=\sum_{1\le i<j<k<l\le N}J_{ijkl}\chi_i\chi_j\chi_k\chi_l,\quad\{\chi_i,\chi_j\}=\delta_{ij},
\end{equation}
where the coefficients are independent real Gaussian random variables with zero mean $\overline{J_{ijkl}}=0$ and variance
\begin{equation} \label{normal}
\overline{J_{ijkl}^2}= \frac{3!}{N^3}.
\end{equation}

We summarize the spectral properties of the SYK model in the thermodynamic limit $N\to+\infty$:
\begin{itemize}
\item The mean energy of $H$ is $\tr H/d=0$, where $d=2^{N/2}$ is the dimension of the Hilbert space.
\item The distribution of eigenvalues near the mean energy is asymptotically normal with variance $\tr(H^2)/d=\Theta(N)$. Asymptotic normality is a general feature of models with random few-body interactions \cite{ES14}.
\item The ground-state energy is $-NE_0+o(N)$ (extensive), where $E_0>0$ is a known constant.
\item The density of states at energy $NE$ is \cite{GV17}
\begin{equation}
D_N(E)\sim e^{\left[\frac{1}{2}\ln 2-\frac{1}{16}\arcsin^2\left(\frac{E}{E_0} \right)\right]N}.
\end{equation}
This is an excellent approximation away from the edges of the spectrum, i.e., when $E_0-|E|$ is not too small.
\end{itemize}

\section{Argument}

We divide the system into two subsystems $A$ and $B$. Subsystem $A$ consists of $M$ Majorana fermions, where $M$ is even. Assume without loss of generality that $M\le N/2$. We split the Hamiltonian (\ref{syk}) into three parts:
\begin{equation}
H=H_A+H_\partial+H_B,
\end{equation}
where $H_{A(B)}$ contains terms acting only on subsystem $A(B)$, and $H_\partial$ consists of cross terms. We observe that
\begin{equation} \label{r}
H'_A:=\left(\frac{N}{M}\right)^\frac{3}{2}H_A
\end{equation}
is the SYK model of $M$ Majorana fermions.

\begin{lemma} [\cite{Weh78}] \label{l1}
The thermal state maximizes the von Neumann entropy among all states with the same energy.
\end{lemma}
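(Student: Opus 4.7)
The plan is to prove this classical result via the non-negativity of the quantum relative entropy (Klein's inequality). First I would fix the energy $E$ and consider the thermal (Gibbs) state $\rho_\beta = e^{-\beta H}/Z$, where $Z = \tr(e^{-\beta H})$ and the inverse temperature $\beta$ is chosen (it exists and is unique under standard assumptions on the spectrum of $H$) so that $\tr(\rho_\beta H) = E$. I would then let $\rho$ denote an arbitrary state satisfying the same energy constraint $\tr(\rho H) = E$, and compare the two via their relative entropy.

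The key computation is to expand
\begin{equation}
S(\rho \,\|\, \rho_\beta) = \tr(\rho \ln \rho) - \tr(\rho \ln \rho_\beta) = -S(\rho) + \beta\, \tr(\rho H) + \ln Z,
\end{equation}
using $\ln \rho_\beta = -\beta H - \ln Z$. Applying the same identity to $\rho_\beta$ itself (where the relative entropy vanishes) gives $S(\rho_\beta) = \beta E + \ln Z$. Subtracting these two relations and using $\tr(\rho H) = \tr(\rho_\beta H) = E$ yields
\begin{equation}
S(\rho_\beta) - S(\rho) = S(\rho \,\|\, \rho_\beta) \ge 0,
\end{equation}
which is the desired inequality.

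The only nontrivial input is Klein's inequality, i.e., $S(\rho \,\|\, \sigma) \ge 0$ for any two density matrices with $\mathrm{supp}(\rho) \subseteq \mathrm{supp}(\sigma)$, which follows from the operator convexity of $x \ln x$ (or equivalently from Jensen's inequality in the eigenbasis of $\sigma$); the support condition is automatic here because $\rho_\beta$ has full rank. I expect the main (minor) obstacle to be just the bookkeeping of existence and uniqueness of the matching $\beta$: for $E$ strictly between the minimum and maximum eigenvalues of $H$, the map $\beta \mapsto \tr(\rho_\beta H)$ is continuous, strictly decreasing, and surjective onto that open interval, so a unique real $\beta$ works; the boundary cases reduce to projectors onto the extremal eigenspaces and can be handled as limits.
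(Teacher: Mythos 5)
Your proposal is correct and follows essentially the same route as the paper's proof: expanding the relative entropy $S(\rho\,\|\,\sigma_\beta)\ge0$ against the Gibbs state at matching energy and using $\ln\sigma_\beta=-\beta H-\ln Z$ to convert nonnegativity into $S(\rho)\le S(\sigma_\beta)$. Your additional remarks on the existence and uniqueness of the matching $\beta$ (and the support condition for Klein's inequality) are sound but go beyond what the paper bothers to verify, since it simply assumes $\rho$ and $\sigma_\beta$ share the same energy.
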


\begin{proof}
For completeness, we give a simple proof of this well-known fact. Let
\begin{equation}
\sigma_\beta:=\frac{1}{Z_\beta}e^{-\beta H},\quad Z_\beta:=\tr e^{-\beta H}
\end{equation}
be a thermal state at inverse temperature $\beta$. For any density matrix $\rho$, the relative entropy
\begin{equation}
S(\rho\|\sigma_\beta):=\tr(\rho\ln\rho-\rho\ln\sigma_\beta)\ge0
\end{equation}
is nonnegative. If $\rho$ and $\sigma_\beta$ have the same energy, then
\begin{multline}
S(\rho)=-\tr(\rho\ln\rho)\le-\tr(\rho\ln\sigma_\beta)=\beta\tr(\rho H)+\ln Z_\beta\\
=\beta\tr(\sigma_\beta H)+\ln Z_\beta=-\tr(\sigma_\beta\ln\sigma_\beta)=S(\sigma_\beta).
\end{multline}
This completes the proof of Lemma \ref{l1}.
\end{proof}

\begin{lemma} \label{l2}
To leading order,
\begin{equation}
S(\rho_A)\le\ln D_M\left(\frac{\tr(\rho_AH'_A)}{M}\right)
\end{equation}
for any density matrix $\rho_A$ of subsystem $A$.
\end{lemma}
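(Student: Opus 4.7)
The plan is to apply Lemma~\ref{l1} to subsystem $A$, treated as an independent $M$-Majorana SYK system governed by the Hamiltonian $H'_A=(N/M)^{3/2}H_A$. Set $E:=\tr(\rho_A H'_A)/M$, and choose the inverse temperature $\beta$ so that the thermal state $\sigma_\beta\propto e^{-\beta H'_A}$ satisfies $\tr(\sigma_\beta H'_A)=ME$. Lemma~\ref{l1} then gives $S(\rho_A)\le S(\sigma_\beta)$, so it suffices to show that $S(\sigma_\beta)\le \ln D_M(E)$ to leading order.

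The second step is to evaluate $S(\sigma_\beta)$ via the standard thermodynamic identity
\begin{equation}
S(\sigma_\beta)=\beta\tr(\sigma_\beta H'_A)+\ln Z_\beta=\beta M E+\ln Z_\beta,
\end{equation}
and to rewrite $Z_\beta=\tr e^{-\beta H'_A}$ as an integral over the density of states of the $M$-fermion SYK model:
\begin{equation}
Z_\beta=\int D_M(E')\,e^{-\beta M E'}\,M\,dE'.
\end{equation}
Since $\ln D_M(E')$ is extensive in $M$ (by the density-of-states formula recalled in the Preliminaries), a saddle-point evaluation at large $M$ picks out $E'=E^*$ defined by $\partial_{E'}\ln D_M(E')\big|_{E^*}=\beta M$. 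Standard thermodynamics identifies $E^*$ with the mean energy density of $\sigma_\beta$, so $E^*=E$. Substituting the saddle-point value $\ln Z_\beta=\ln D_M(E)-\beta M E$ into the entropy expression yields $S(\sigma_\beta)=\ln D_M(E)$ to leading order in $M$, completing the bound.

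The main obstacle is justifying that the saddle-point evaluation is accurate to leading order uniformly in $\beta$ (equivalently, in $E$). This requires that $\ln D_M$ is smooth and strictly concave near $E$ on the scale of the fluctuations, so that the Gaussian integral around the saddle contributes only subleading corrections, and that $E$ does not lie too close to the spectral edge $\pm E_0$ where the density-of-states formula from the Preliminaries breaks down. Away from the edges, the asymptotic formula for $D_M$ is smooth and strictly concave in $E$, so these conditions are met; subleading corrections from the Gaussian fluctuations and from the prefactors of $D_M$ are all $O(\ln M)$ and are absorbed into the ``leading order'' qualifier. One should also note that $\beta$ may be negative if $E>0$, but this poses no difficulty since the spectrum is bounded and $\ln Z_\beta$ is analytic in $\beta$.
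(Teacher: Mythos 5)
Your proposal is correct and follows essentially the same route as the paper: bound $S(\rho_A)$ by the entropy of the thermal state of $H'_A$ at matched energy via Lemma~\ref{l1}, then evaluate $S(\sigma_\beta)=\beta\tr(\rho_AH'_A)+\ln Z_\beta$ by a saddle-point analysis of the density-of-states integral, with your saddle condition $\partial_{E'}\ln D_M(E')\big|_{E^*}=\beta M$ matching the paper's Eq.~(\ref{s}). Your added remarks on concavity, spectral edges, and negative $\beta$ are sensible elaborations of the same argument rather than a departure from it.
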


\begin{proof}
Consider the SYK model $H'_A$. The inverse temperature $\beta$ at which the thermal state $\sigma_\beta:=e^{-\beta H'_A}/\tr e^{-\beta H'_A}$ has the same energy as $\rho_A$ is obtained by solving
\begin{equation}
\frac{\int_{-E_0}^{E_0}M\epsilon e^{-\beta M\epsilon}D_M(\epsilon)\mathrm d\epsilon}{\int_{-E_0}^{E_0}e^{-\beta M\epsilon}D_M(\epsilon)\mathrm d\epsilon}=\tr(\rho_AH'_A).
\end{equation}
The saddle-point approximation, which becomes exact in the limit $M\to+\infty$, gives
\begin{equation} \label{s}
\beta=\frac{\partial\ln D_M(\epsilon)}{M\partial\epsilon}\Big|_{\epsilon=\frac{\tr(\rho_AH'_A)}{M}}.
\end{equation}
Using Lemma \ref{l1},
\begin{multline}
S(\rho_A)\le S(\sigma_\beta)=\ln\int_{-E_0}^{E_0}e^{\beta(\tr(\rho_AH'_A)-M\epsilon)}D_M(\epsilon)\mathrm d\epsilon\\
=\ln D_M\left(\frac{\tr(\rho_AH'_A)}{M}\right).
\end{multline}
In the last step, we used the saddle-point approximation and Eq. (\ref{s}).
\end{proof}

\begin{theorem} \label{t1}
Let $|\psi\rangle$ be an eigenstate of the SYK model (\ref{syk}) with energy $NE$. To leading order,
\begin{equation} \label{up}
S(\rho_A)\le\ln D_M\left(\left(\frac{M}{N}\right)^{\frac{3}{2}}E\right),
\end{equation}
where $\rho_A=\tr_B|\psi\rangle\langle\psi|$ is the reduced density matrix. 
\end{theorem}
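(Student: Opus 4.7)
The plan is to apply Lemma~\ref{l2} to $\rho_A=\tr_B|\psi\rangle\langle\psi|$ and identify the appropriate energy argument. From the definition $H'_A=(N/M)^{3/2}H_A$,
\begin{equation}
\frac{\tr(\rho_A H'_A)}{M} = \left(\frac{N}{M}\right)^{3/2}\frac{\langle\psi|H_A|\psi\rangle}{M},
\end{equation}
so the remaining task is to estimate $\langle\psi|H_A|\psi\rangle$ in terms of the total energy $NE$.

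The key step is a symmetry argument for the disorder. The joint distribution of the couplings $\{J_{ijkl}\}$ is invariant under arbitrary permutations $\pi\in S_N$ of the Majorana labels; such a permutation conjugates $H$ to an isospectral copy and maps its eigenstates to eigenstates. Consequently, after disorder averaging, $\overline{\langle\psi|J_{ijkl}\chi_i\chi_j\chi_k\chi_l|\psi\rangle}$ takes the same value for every quartet $(i,j,k,l)\subset\{1,\ldots,N\}$ (provided $|\psi\rangle$ is selected equivariantly, e.g., as the $k$-th eigenstate). Summing over the $\binom{M}{4}$ quartets in $A$ and comparing with the $\binom{N}{4}$ total quartets, whose expectations sum to $NE$, gives
\begin{equation}
\overline{\langle\psi|H_A|\psi\rangle}=\frac{\binom{M}{4}}{\binom{N}{4}}\,NE = \left(\frac{M}{N}\right)^{4}\!NE\,\bigl(1+O(1/M)\bigr).
\end{equation}
Invoking self-averaging (standard in SYK-type models at large $N$) to replace the disorder average by the value for a typical realization, substitution into Lemma~\ref{l2} yields $\tr(\rho_A H'_A)/M\sim(M/N)^{3/2}E$, which is the claimed bound.

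The main obstacle is controlling error terms at the ``to leading order'' level: three contributions must be shown to be subleading to the $O(N)$ entropy scale --- (i) the combinatorial correction $\binom{M}{4}/\binom{N}{4}-(M/N)^{4}=O(1/M)$; (ii) deviations of $\langle\psi|H_A|\psi\rangle$ from its disorder average, requiring concentration for a sum of $\binom{M}{4}$ quartic terms in a single eigenstate of a disordered Hamiltonian; and (iii) saddle-point corrections in Lemma~\ref{l2}. Items (i) and (iii) are routine, but (ii) is the genuinely delicate step, and absent a strong self-averaging bound the estimate should be understood as holding for typical eigenstates of typical realizations of the disorder.
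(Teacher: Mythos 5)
Your proposal follows essentially the same route as the paper: the paper likewise counts the $M^4/4!$ quartets in $H_A$ against the $N^4/4!$ total, asserts ``in average'' that $\tr(\rho_A H_A)=(M/N)^4NE$, rescales via $H'_A=(N/M)^{3/2}H_A$ to get $\tr(\rho_A H'_A)=(M/N)^{3/2}ME$, and invokes Lemma~\ref{l2}. Your permutation-symmetry justification of the equal-quartet expectation and your explicit flagging of the self-averaging/concentration issue simply make rigorous what the paper compresses into the phrase ``in average,'' so the two arguments are the same in substance.
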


\begin{proof}
To leading order, the quartic Hamiltonian (\ref{syk}) has $N^4/4!$ terms, $M^4/4!$ of which are in $H_A$. In average,
\begin{equation}
\tr(\rho_AH_A)=\left(\frac{M}{N}\right)^4NE.
\end{equation}
Substituting Eq. (\ref{r}),
\begin{equation}
\tr(\rho_AH'_A)=\left(\frac{M}{N}\right)^\frac{5}{2}NE=\left(\frac{M}{N}\right)^\frac{3}{2}ME.
\end{equation}
Hence, Theorem \ref{t1} follows from Lemma \ref{l2}.
\end{proof}

A slight modification of the proof(s) of Theorem \ref{t1} (and Lemma \ref{l2}) yields

\begin{theorem} \label{t2}
For (translationally invariant) local Hamiltonians, the entanglement entropy of an eigenstate between two subsystems is to leading order upper bounded by the thermodynamic entropy of the smaller subsystem at the same energy density.
\end{theorem}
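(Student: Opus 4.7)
The plan is to adapt the proofs of Lemma~\ref{l2} and Theorem~\ref{t1}, replacing the SYK-specific rescaling with the geometric structure of a local Hamiltonian. As before I split $H=H_A+H_\partial+H_B$, where $H_A$ collects the local terms fully supported in the smaller subsystem $A$ of size $M\le N/2$, $H_B$ those fully supported in $B$, and $H_\partial$ the finite-range terms crossing the boundary. Locality gives $\|H_\partial\|=O(|\partial A|)$, which is subextensive in $M$. Crucially, \emph{no} rescaling analogous to $(N/M)^{3/2}$ is needed: translational invariance makes $H_A$ a copy of the same local Hamiltonian on an $M$-site region, with its own density of states $D_M(\epsilon)$.

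First I would establish the analog of Lemma~\ref{l2},
\begin{equation}
S(\rho_A)\le\ln D_M\!\left(\frac{\tr(\rho_A H_A)}{M}\right),
\end{equation}
to leading order. The argument is identical: introduce the thermal state $\sigma_\beta\propto e^{-\beta H_A}$ at the same energy as $\rho_A$, invoke Lemma~\ref{l1}, and evaluate the resulting partition function by saddle point, using the smoothness of $\ln D_M$ in the bulk of the spectrum.

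Second, I would compute $\tr(\rho_A H_A)$ for an eigenstate $|\psi\rangle$ of $H$ with energy density $E$. Writing $H=\sum_x h_x$ as a sum of local terms, translational invariance forces $\langle\psi|h_x|\psi\rangle$ to be position-independent and equal to $E$ per site. Summing over the $M-O(|\partial A|)$ terms fully supported in $A$ yields $\tr(\rho_A H_A)=ME+O(|\partial A|)$, so the right-hand side of the Lemma~\ref{l2} analog collapses to $\ln D_M(E)$ at leading order --- precisely the thermodynamic entropy of an $M$-site system at the same energy density $E$ as the full eigenstate.

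The main obstacle is the boundary contribution. The norm bound on $H_\partial$ is easy, but reducing $\tr(\rho_A H_A)$ to $ME$ at leading order really relies on translational invariance: without it, the local energy density can fluctuate across the system and the statement would have to be restated in terms of the average energy density inside $A$ rather than the global $E$. The saddle-point approximation also requires that the eigenstate not lie too close to the spectral edge, exactly as in Lemma~\ref{l2}. Once these ingredients are in place, the contrast with Theorem~\ref{t1} is clean: in SYK the all-to-all interactions shift the effective energy density seen by $A$ to $(M/N)^{3/2}E$, whereas for local Hamiltonians it remains $E$, recovering the conventional ETH-for-entropy statement.
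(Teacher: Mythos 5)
Your proposal is correct and is essentially the paper's own argument: the paper gives no separate proof of Theorem~\ref{t2}, stating only that it follows by ``a slight modification of the proof(s) of Theorem~\ref{t1} (and Lemma~\ref{l2}),'' and your write-up spells out exactly that modification --- dropping the $(N/M)^{3/2}$ rescaling, using translational invariance to get $\tr(\rho_A H_A)=ME+O(|\partial A|)$ in place of the SYK term-counting $(M/N)^4 NE$, and reusing Lemma~\ref{l1} with the saddle point. Your observations that the boundary terms are subextensive and that translational invariance (the paper's parenthetical hypothesis) is what pins the subsystem energy density to the global $E$ correctly identify why the bound involves the \emph{same} energy density here, unlike in Theorem~\ref{t1}.
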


\begin{remark}
The upper bound in Theorem \ref{t2} holds regardless of whether the system is chaotic. Notably, it is not tight in (integrable) free-fermion systems \cite{VHBR17}. In chaotic systems, analytical arguments \cite{Deu10} and numerical simulations \cite{SPR12, DLS13, GG15} strongly suggest that the bound is attained.
\end{remark}

Since the SYK model is chaotic, one might expect that the upper bound in Theorem \ref{t1} is attained. Thus, we have a volume law
\begin{equation}
S(\rho_A)\sim\left[\frac{1}{2}\ln 2-\frac{1}{16}\arcsin^2\left(\left( \frac{M}{N}\right)^{\frac{3}{2}}\frac{E}{E_0}\right)\right]M
\end{equation}
with a coefficient depending on $E$ (the energy density of the eigenstate) and $M$ (the size of the smaller subsystem). This is the leading-order scaling of the eigenstate entanglement entropy, and we are unable to calculate the subleading corrections.

\begin{example}
Suppose that $N$ is a multiple of $4$. For $M=N/2$, the ground-state entanglement entropy scales as
\begin{multline} \label{ana}
S(\rho_A)\sim\left(\frac{1}{2}\ln 2-\frac{1}{16}\arcsin^2\frac{1}{2\sqrt2}\right)M\\
\approx(0.34657-0.00816)M=0.33841M.
\end{multline}
\end{example}

\section{Conclusions and outlook}

In the SYK model, we have argued that the entanglement entropy of an eigenstate with energy $NE$ for a subsystem of size $M\le N/2$ is to leading order equal to the thermodynamic entropy of the subsystem at energy $(M/N)^{3/2}ME$. Therefore,
\begin{itemize}
\item The entanglement entropy of an eigenstate obeys a volume law with the maximum coefficient $\frac{1}{2}\ln2$ if the subsystem size is a vanishing fraction of the system size. This is because the subsystem is at the mean energy density of the Hamiltonian (\ref{r}).
\item The entanglement entropy of an eigenstate with finite energy density obeys a volume law with a non-maximal coefficient if the subsystem size is a constant fraction of the system size.
\end{itemize}

In the future, it would be interesting to study the Renyi entanglement entropy of eigenstates of the SYK model. As a generalization of entanglement entropy, the Renyi entanglement entropy reflects the entanglement spectrum (the full spectrum of the reduced density matrix $\rho_A$). See Refs. \cite{GG15, DLL18, LG17} for recent results on the Renyi entanglement entropy of eigenstates of chaotic local Hamiltonians.

\begin{acknowledgments}

We would like to thank Xiao-Liang Qi for pointing out Lemma \ref{l1}. Y.H. acknowledges funding provided by the Institute for Quantum Information and Matter, an NSF Physics Frontiers Center (NSF Grant PHY-1733907). Y.H. is also supported by NSF DMR-1654340. Y.G. is supported by the Gordon and Betty Moore Foundation EPiQS Initiative through Grant GBMF-4306.

\end{acknowledgments}

\emph{Note added.}|Very recently, we became aware of related work by Liu \emph{et al.} \cite{LCB17}, which studied eigenstate entanglement in the SYK model using different methods. Among other results, the ground-state entanglement entropy was calculated up to $N=44$ Majorana fermions using exact diagonalization. For $M=N/2$ (the subsystem size is half the system size), the data are well fitted by the expression $0.3375M-0.666$, which is consistent with our analytical result (\ref{ana}).

\bibliographystyle{abbrv}
\bibliography{syk}

\begin{thebibliography}{10}

\bibitem{BN13}
B.~Bauer and C.~Nayak.
\newblock Area laws in a many-body localized state and its implications for
  topological order.
\newblock {\em Journal of Statistical Mechanics: Theory and Experiment},
  2013(9):P09005, 2013.

\bibitem{BKLS86}
L.~Bombelli, R.~K. Koul, J.~Lee, and R.~D. Sorkin.
\newblock Quantum source of entropy for black holes.
\newblock {\em Physical Review D}, 34(2):373--383, 1986.

\bibitem{CC04}
P.~Calabrese and J.~Cardy.
\newblock Entanglement entropy and quantum field theory.
\newblock {\em Journal of Statistical Mechanics: Theory and Experiment},
  2004(06):P06002, 2004.

\bibitem{CC09}
P.~Calabrese and J.~Cardy.
\newblock Entanglement entropy and conformal field theory.
\newblock {\em Journal of Physics A: Mathematical and Theoretical},
  42(50):504005, 2009.

\bibitem{DKPR16}
L.~D'Alessio, Y.~Kafri, A.~Polkovnikov, and M.~Rigol.
\newblock From quantum chaos and eigenstate thermalization to statistical
  mechanics and thermodynamics.
\newblock {\em Advances in Physics}, 65(3):239--362, 2016.

\bibitem{Deu91}
J.~M. Deutsch.
\newblock Quantum statistical mechanics in a closed system.
\newblock {\em Physical Review A}, 43(4):2046--2049, 1991.

\bibitem{Deu10}
J.~M. Deutsch.
\newblock Thermodynamic entropy of a many-body energy eigenstate.
\newblock {\em New Journal of Physics}, 12(7):075021, 2010.

\bibitem{DLS13}
J.~M. Deutsch, H.~Li, and A.~Sharma.
\newblock Microscopic origin of thermodynamic entropy in isolated systems.
\newblock {\em Physical Review E}, 87(4):042135, 2013.

\bibitem{DLL18}
A.~Dymarsky, N.~Lashkari, and H.~Liu.
\newblock Subsystem eigenstate thermalization hypothesis.
\newblock {\em Physical Review E}, 97(1):012140, 2018.

\bibitem{ECP10}
J.~Eisert, M.~Cramer, and M.~B. Plenio.
\newblock \textit{Colloquium}: Area laws for the entanglement entropy.
\newblock {\em Reviews of Modern Physics}, 82(1):277--306, 2010.

\bibitem{ES14}
L.~Erd{\H{o}}s and D.~Schr{\"o}der.
\newblock Phase transition in the density of states of quantum spin glasses.
\newblock {\em Mathematical Physics, Analysis and Geometry}, 17(3-4):441--464,
  2014.

\bibitem{FS16}
W.~Fu and S.~Sachdev.
\newblock Numerical study of fermion and boson models with infinite-range
  random interactions.
\newblock {\em Physical Review B}, 94(3):035135, 2016.

\bibitem{GV17}
A.~M. Garc\'{\i}a-Garc\'{\i}a and J.~J.~M. Verbaarschot.
\newblock Analytical spectral density of the {S}achdev-{Y}e-{K}itaev model at
  finite {$N$}.
\newblock {\em Physical Review D}, 96(6):066012, 2017.

\bibitem{GG15}
J.~R. Garrison and T.~Grover.
\newblock Does a single eigenstate encode the full {H}amiltonian?
\newblock {\em Physical Review X}, 8(2):021026, 2018.

\bibitem{GHLS15}
S.~Gharibian, Y.~Huang, Z.~Landau, and S.~W. Shin.
\newblock Quantum {H}amiltonian complexity.
\newblock {\em Foundations and Trends in Theoretical Computer Science},
  10(3):159--282, 2015.

\bibitem{gu2017spread}
Y.~Gu, A.~Lucas, and X.-L. Qi.
\newblock Spread of entanglement in a {S}achdev-{Y}e-{K}itaev chain.
\newblock {\em Journal of High Energy Physics}, 2017(9):120, 2017.

\bibitem{HLW94}
C.~Holzhey, F.~Larsen, and F.~Wilczek.
\newblock Geometric and renormalized entropy in conformal field theory.
\newblock {\em Nuclear Physics B}, 424(3):443--467, 1994.

\bibitem{Hua15}
Y.~Huang.
\newblock {\em Classical simulation of quantum many-body systems}.
\newblock PhD thesis, University of California, Berkeley, 2015.

\bibitem{Hua19}
Y.~Huang.
\newblock Universal eigenstate entanglement of chaotic local {H}amiltonians.
\newblock {\em Nuclear Physics B}, 938:594--604, 2019.

\bibitem{HM14}
Y.~Huang and J.~E. Moore.
\newblock Excited-state entanglement and thermal mutual information in random
  spin chains.
\newblock {\em Physical Review B}, 90(22):220202, 2014.

\bibitem{HLZ18}
N.~Hunter-Jones, J.~Liu, and Y.~Zhou.
\newblock On thermalization in the {SYK} and supersymmetric {SYK} models.
\newblock {\em Journal of High Energy Physics}, 2018(2):142, 2018.

\bibitem{HNO+13}
D.~A. Huse, R.~Nandkishore, V.~Oganesyan, A.~Pal, and S.~L. Sondhi.
\newblock Localization-protected quantum order.
\newblock {\em Physical Review B}, 88(1):014206, 2013.

\bibitem{KLW15}
J.~P. Keating, N.~Linden, and H.~J. Wells.
\newblock Spectra and eigenstates of spin chain {H}amiltonians.
\newblock {\em Communications in Mathematical Physics}, 338(1):81--102, 2015.

\bibitem{Kit14}
A.~Kitaev.
\newblock Hidden correlations in the {H}awking radiation and thermal noise.
\newblock In {\em 2015 Breakthrough Prize in Fundamental Physics Symposium},
  2014.
\newblock
  \url{https://www.youtube.com/watch?v=OQ9qN8j7EZI},\url{http://online.kitp.ucsb.edu/online/joint98/kitaev/}.

\bibitem{Kit15}
A.~Kitaev.
\newblock A simple model of quantum holography.
\newblock In {\em KITP Program: Entanglement in Strongly-Correlated Quantum
  Matter}, 2015.
\newblock
  \url{http://online.kitp.ucsb.edu/online/entangled15/kitaev/},\url{http://online.kitp.ucsb.edu/online/entangled15/kitaev2/}.

\bibitem{KP06}
A.~Kitaev and J.~Preskill.
\newblock Topological entanglement entropy.
\newblock {\em Physical Review Letters}, 96(11):110404, 2006.

\bibitem{kourkoulou2017pure}
I.~Kourkoulou and J.~Maldacena.
\newblock Pure states in the {SYK} model and nearly-$ {AdS_2} $ gravity.
\newblock {a}rXiv:1707.02325, 2017.

\bibitem{LO69}
A.~I. Larkin and Y.~N. Ovchinnikov.
\newblock Quasiclassical method in the theory of superconductivity.
\newblock {\em Sov. Phys. JETP}, 28(6):1200--1205, 1969.

\bibitem{LRV04}
J.~I. Latorre, E.~Rico, and G.~Vidal.
\newblock Ground state entanglement in quantum spin chains.
\newblock {\em Quantum Information and Computation}, 4(1):48--92, 2004.

\bibitem{LW06}
M.~Levin and X.-G. Wen.
\newblock Detecting topological order in a ground state wave function.
\newblock {\em Physical Review Letters}, 96(11):110405, 2006.

\bibitem{LH08}
H.~Li and F.~D.~M. Haldane.
\newblock Entanglement spectrum as a generalization of entanglement entropy:
  Identification of topological order in non-abelian fractional quantum {H}all
  effect states.
\newblock {\em Physical Review Letters}, 101(1):010504, 2008.

\bibitem{LCB17}
C.~Liu, X.~Chen, and L.~Balents.
\newblock Quantum entanglement of the {S}achdev-{Y}e-{K}itaev models.
\newblock {\em Physical Review B}, 97(24):245126, 2018.

\bibitem{LG17}
T.-C. Lu and T.~Grover.
\newblock Renyi entropy of chaotic eigenstates.
\newblock {\em Physical Review E}, 99(3):032111, 2019.

\bibitem{MSS16}
J.~Maldacena, S.~H. Shenker, and D.~Stanford.
\newblock A bound on chaos.
\newblock {\em Journal of High Energy Physics}, 2016(8):106, 2016.

\bibitem{MS16}
J.~Maldacena and D.~Stanford.
\newblock Remarks on the {S}achdev-{Y}e-{K}itaev model.
\newblock {\em Physical Review D}, 94(10):106002, 2016.

\bibitem{NWFS+18}
Y.~O. Nakagawa, M.~Watanabe, H.~Fujita, and S.~Sugiura.
\newblock Universality in volume-law entanglement of scrambled pure quantum
  states.
\newblock {\em Nature Communications}, 9:1635, 2018.

\bibitem{Osb12}
T.~J. Osborne.
\newblock Hamiltonian complexity.
\newblock {\em Reports on Progress in Physics}, 75(2):022001, 2012.

\bibitem{RDO08}
M.~Rigol, V.~Dunjko, and M.~Olshanii.
\newblock Thermalization and its mechanism for generic isolated quantum
  systems.
\newblock {\em Nature}, 452(7189):854--858, 2008.

\bibitem{RSS15}
D.~A. Roberts, D.~Stanford, and L.~Susskind.
\newblock Localized shocks.
\newblock {\em Journal of High Energy Physics}, 2015(3):51, 2015.

\bibitem{RT06b}
S.~Ryu and T.~Takayanagi.
\newblock Aspects of holographic entanglement entropy.
\newblock {\em Journal of High Energy Physics}, 2006(8):45, 2006.

\bibitem{RT06a}
S.~Ryu and T.~Takayanagi.
\newblock Holographic derivation of entanglement entropy from the anti--de
  {S}itter space/conformal field theory correspondence.
\newblock {\em Physical Review Letters}, 96(18):181602, 2006.

\bibitem{SY93}
S.~Sachdev and J.~Ye.
\newblock Gapless spin-fluid ground state in a random quantum {H}eisenberg
  magnet.
\newblock {\em Physical Review Letters}, 70(21):3339--3342, 1993.

\bibitem{SPR12}
L.~F. Santos, A.~Polkovnikov, and M.~Rigol.
\newblock Weak and strong typicality in quantum systems.
\newblock {\em Physical Review E}, 86(1):010102, 2012.

\bibitem{SWVC08}
N.~Schuch, M.~M. Wolf, F.~Verstraete, and J.~I. Cirac.
\newblock Entropy scaling and simulability by matrix product states.
\newblock {\em Physical Review Letters}, 100(3):030504, 2008.

\bibitem{SS14b}
S.~H. Shenker and D.~Stanford.
\newblock Multiple shocks.
\newblock {\em Journal of High Energy Physics}, 2014(12):46, 2014.

\bibitem{SS15}
S.~H. Shenker and D.~Stanford.
\newblock Stringy effects in scrambling.
\newblock {\em Journal of High Energy Physics}, 2015(5):132, 2015.

\bibitem{SV17}
J.~Sonner and M.~Vielma.
\newblock Eigenstate thermalization in the {S}achdev-{Y}e-{K}itaev model.
\newblock {\em Journal of High Energy Physics}, 2017(11):149, 2017.

\bibitem{Sre93}
M.~Srednicki.
\newblock Entropy and area.
\newblock {\em Physical Review Letters}, 71(5):666--669, 1993.

\bibitem{Sre94}
M.~Srednicki.
\newblock Chaos and quantum thermalization.
\newblock {\em Physical Review E}, 50(2):888--901, 1994.

\bibitem{VC06}
F.~Verstraete and J.~I. Cirac.
\newblock Matrix product states represent ground states faithfully.
\newblock {\em Physical Review B}, 73(9):094423, 2006.

\bibitem{VLRK03}
G.~Vidal, J.~I. Latorre, E.~Rico, and A.~Kitaev.
\newblock Entanglement in quantum critical phenomena.
\newblock {\em Physical Review Letters}, 90(22):227902, 2003.

\bibitem{VHBR17}
L.~Vidmar, L.~Hackl, E.~Bianchi, and M.~Rigol.
\newblock Entanglement entropy of eigenstates of quadratic fermionic
  {H}amiltonians.
\newblock {\em Physical Review Letters}, 119(2):020601, 2017.

\bibitem{VR17}
L.~Vidmar and M.~Rigol.
\newblock Entanglement entropy of eigenstates of quantum chaotic
  {H}amiltonians.
\newblock {\em Physical Review Letters}, 119(22):220603, 2017.

\bibitem{Weh78}
A.~Wehrl.
\newblock General properties of entropy.
\newblock {\em Reviews of Modern Physics}, 50(2):221--260, 1978.

\bibitem{YCHM15}
Z.-C. Yang, C.~Chamon, A.~Hamma, and E.~R. Mucciolo.
\newblock Two-component structure in the entanglement spectrum of highly
  excited states.
\newblock {\em Physical Review Letters}, 115(26):267206, 2015.

\end{thebibliography}

\end{document}